\newcommand{\np}{\mathsf{NP}}
\begin{document}
\title{Similarity Between Points in \\ Metric Measure Spaces}
%
%\titlerunning{Abbreviated paper title}
% If the paper title is too long for the running head, you can set
% an abbreviated paper title here
%
\author{Evgeny Dantsin \textsuperscript{\Letter} \and Alexander Wolpert \textsuperscript{\Letter}}
\authorrunning{E. Dantsin \and A. Wolpert}
% First names are abbreviated in the running head.
% If there are more than two authors, 'et al.' is used.
%
\institute{Roosevelt University, Chicago, IL, USA \\
\email{\{edantsin,awolpert\}@roosevelt.edu}}
\maketitle              % typeset the header of the contribution
\begin{abstract}
This paper is about similarity between objects that can be represented as points in metric measure spaces. A metric measure space is a metric space that is also equipped with a measure. For example, a network with distances between its nodes and weights assigned to its nodes is a metric measure space. Given points $x$ and $y$ in different metric measure spaces or in the same space, how similar are they? A well known approach is to consider $x$ and $y$ similar if their neighborhoods are similar. For metric measure spaces, similarity between neighborhoods is well captured by the Gromov-Hausdorff-Prokhorov distance, but it is $\np$-hard to compute this distance even in quite simple cases. We propose a tractable alternative: the \emph{radial distribution distance} between the neighborhoods of $x$ and $y$. The similarity measure based on the radial distribution distance is coarser than the similarity based on the Gromov-Hausdorff-Prokhorov distance but much easier to compute. 

\keywords{metric measure space  \and Gromov-Hausdorff-Prokhorov distance \and radial distribution.}
\end{abstract}
\section{Introduction}
\label{sec:intro}

A \emph{metric measure space} is a metric space that is also equipped with a measure. 
Such spaces play an important role in geometry, especially after Gromov's works \cite{Gro99}, and they have proven to be useful in other areas of mathematics, for example, in optimization theory \cite{AGS05} and in probability theory \cite{ALW17}. Metric measure spaces are also used to model real-world systems and processes, for example, in image recognition \cite{Mem11}, in genetics \cite{RB19}, in machine learning \cite{ACB17}, etc.  

A natural example of a metric measure space is given by a connected graph $G$ in which all vertices and edges are labeled with numbers: the number assigned to a vertex is its ``weight'' and the number assigned to an edge is its ``length''. The corresponding metric space is formed by the set of all vertices with the shortest path metric in $G$: the distance between two vertices is the length of a shortest path between them. A measure on this space is defined on all subsets of the vertices: the measure of a subset $A$ is the total weight of all vertices of $A$. The weights and lengths can be interpreted in various ways. For example, if $G$ is a communication network, then the weight of a vertex can describe the traffic at this vertex. Another example: if $G$ is a propagation network in epidemic models, then the weight can be the number of infected individuals. 

Consider objects that can be modeled by points in metric measure spaces: suppose one object is represented by a point $x$ in a space $\mathcal{X}$ and another object is represented by a point $y$ in a space $\mathcal{Y}$. How similar are these objects? How can we measure similarity between them if the only information we have is the pairs $(\mathcal{X},x)$ and $(\mathcal{Y},y)$? Such pairs are called \emph{rooted metric measure spaces} or \emph{rooted mm spaces} for short, see Section~\ref{sec:3} for precise definitions. In this paper we address the question of similarity between objects modeled by rooted mm spaces.

The most obvious type of similarity between $(\mathcal{X},x)$ and $(\mathcal{Y},y)$ is an \emph{isomorphism} between them, which means that there is a bijection from $\mathcal{X}$ to $\mathcal{Y}$ that maps $x$ to $y$ and preserves the metric and measure. This is an ``all or nothing'' measure of similarity: any two rooted mm spaces are either similar or not. Clearly, this measure is not a good solution for applications because real-world objects, like social, biological, or technological networks, are very rarely, if ever, isomorphic to one another.

Can we improve the isomorphism-based approach to make it more flexible? How could we measure to what extent $(\mathcal{X},x)$ and $(\mathcal{Y},y)$ look isomorphic? The concept of ``approximate isomorphism'' between rooted mm spaces can be implemented using the idea proposed by Edwards \cite{Edw75} and Gromov \cite{Gro81}. To compare $(\mathcal{X},x)$ and $(\mathcal{Y},y)$, we embed $\mathcal{X}$ and $\mathcal{Y}$ into another metric measure space $\mathcal{Z}$ and compare their images in $\mathcal{Z}$. More exactly, we take embeddings $f$ and $g$ that preserve the metric and measure and compare the images $f(\mathcal{X})$ and $g(\mathcal{Y})$ in $\mathcal{Z}$. We consider $(\mathcal{X},x)$ and $(\mathcal{Y},y)$ \emph{similar} if their images are close to each other in the following sense: 
\begin{itemize}
\item the point $f(x)$ is close to the point $g(y)$ in the space $\mathcal{Z}$;
\item the set of points of $f(\mathcal{X})$ is close to the set of points of $g(\mathcal{Y})$ in the space $\mathcal{Z}$;
\item the measures induced by $f$ and $g$ in $\mathcal{Z}$ are close to one another.  
\end{itemize}
The second condition is formalized using the \emph{Hausdorff distance} and the third condition is formalized using the \emph{L\'{e}vy-Prokhorov distance}, see Section~\ref{sec:3}. Taking the infimum over all possible spaces $\mathcal{Z}$ and embeddings $f$ and $g$, we obtain a distance function on rooted mm spaces called the \emph{Gromov-Hausdorff-Prokhorov distance} (the \emph{GHP distance} for short). 

Both the isomorphism-based similarity measure and the GHP-based similarity measure have the following disadvantage for applications. Most real-world systems have the distance decay effect, also called the gravity model, which is often expressed as ``all things are related, but near things are more related than far things''. For example, when comparing points $x$ and $y$ in metric spaces, the role of their local neighborhoods is more important than the role of points that are far away from $x$ and $y$. However, neither the isomorphism approach nor the GHP distance capture this effect: all points are considered equally important, independently of their distance from $x$ and $y$. 

This disadvantage is eliminated using the distance defined in \cite{ADH13}. Loosely speaking, this distance between two rooted mm spaces combines the GHP distance with an exponential decay: a point is taken into account with a weight that exponentially decreases with increasing its distance from the root. We call it the \emph{neighborhood-based} distance and describe it in Section~\ref{sec:3}. 

Intuitively, the neighborhood-based distance is the best approach to capture similarity between points in metric measure spaces. To put this distance to work in practical applications, we need to compute it efficiently. However, it is $\np$-hard to compute the neighborhood-based distance, which follows from \cite{MSW19}. Moreover, under standard complexity-theoretic assumptions, it is not possible to approximate it with a reasonable factor in polynomial time \cite{Sch17}. 

In Section~\ref{sec:4}, we propose a tractable alternative to the neighborhood-based distance: namely, we define the \emph{radial distribution distance} between rooted mm spaces. This distance can be viewed as a coarser variant of the neighborhood-based distance or, more exactly, as a lower bound on the neighborhood-based distance. The advantage of the radial distribution distance is that it can be computed efficiently: a straightforward algorithm that computes this distance between finite rooted mm spaces takes time quasilinear in the total number of points. 

What is the idea of radial distribution distance? First, we view a point $x$ in a metric measure space $\mathcal{X}$ as the center of a ball of radius $r$ around $x$. This ball has its own measure (its ``mass'') and we consider how such masses change with increasing $r$. Second, we consider this change of masses with an exponential distance decay, which means that the ``contribution'' of points exponentially decreases with increasing their distance from $x$. The radial distribution distance between $(\mathcal{X},x)$ and $(\mathcal{Y},y)$ is basically the distance between two functions that describe the change of masses around $x$ in $\mathcal{X}$ and around $y$ in $\mathcal{Y}$. 

%%%%%%%%%%%%%%%%%%%%%%%%%%%%%%%%%%%%%%%%%%%%%%%%%%%%%%%%%%%%%%%%%%%%%%%

%\input{prelim} 

%%%%%%%%%%%%%%%%%%%%%%%%%%%%%%%%%%%%%%%%%%%%%%%%%%%%%%%%%%%%%%%%%%%%%%%%%

%%% \input{s2} end
%
%
%%% \input{s3} begin
%
\section{Neighborhood-Based Distance}
\label{sec:3}

%\paragraph{Metric measure spaces.}
A metric measure space is usually defined as a complete separable metric space with a Borel measure on this metric space. However, in this paper, we deal with only compact metric spaces and finite measures. Therefore, to simplify terminology, we use the term ``metric measure space'' to refer to this restricted case.

\begin{definition}[mm space]
Let $X$ be a set and $d$ be a metric on $X$ such that $(X,d)$ is a compact metric space. Let $\mu$ be a Borel measure on this metric space. The triplet $(X,d,\mu)$ is called a \emph{metric measure space} (an \emph{mm space} for short). 
\end{definition}

We use the following notation for balls in metric spaces: for every number $r \in [0,\infty)$ and every point $x \in X$,
\begin{itemize}
\item $B_r(x) = \{y \in X \ | \ d(x,y) < r\}$ is an \emph{open ball} of radius $r$ around $x$;
\item $\overline{B}_r(x) = \{y \in X \ | \ d(x,y) \le r\}$ is a \emph{closed ball} of radius $r$ around $x$.
\end{itemize}
For every number $r>0$ and every nonempty subset $S \subseteq X$, the \emph{$r$-neighborhood} of $S$, denoted by $N_r(S)$, is the union of open balls of radius $r$ around the points of $S$. Let $S_1$ and $S_2$ be nonempty subsets of $X$. The \emph{Hausdorff distance} between them is defined by 
$$
d_H(S_1,S_2) = \inf \{r \in [0,\infty) \ | \ \mbox{$S_1 \subseteq N_r(S_2)$ and $S_2 \subseteq N_r(S_1)$}\} .
$$
Given two subsets of points in a metric space, the Hausdorff distance shows how close they are. Given two measures on a metric space, how close are they? This question is answered by the \emph{L\'{e}vy-Prokhorov distance}, a measure-theoretic analogue of the Hausdorff distance, defined as follows. Let $\mu_1$ and $\mu_2$ be finite Borel measures on a metric space $(X,d)$. Let $\mathcal{B}$ be the Borel $\sigma$-algebra on $(X,d)$. The \emph{L\'{e}vy-Prokhorov distance} (sometimes called the \emph{Prokhorov distance}) between these measures, denoted by $\pi$, is given by
$$
\begin{array}{lclcl}
\pi(\mu_1,\mu_2) & = & \inf \{r \in [0,\infty) & | & \mbox{$\mu_1(S) \le \mu_2(N_r(S)) + r$ and} \\
                 & & & & \mbox{$\mu_2(S) \le \mu_1(N_r(S)) + r$ \ for all $S \in \mathcal{B}$}\} .
\end{array}
$$

%\subsection{Gromov-Hausdorff-Prokhorov Distance}

%\paragraph{Gromov-Hausdorff-Prokhorov distance.}
Measuring similarity between points in different mm spaces or in the same space, it is convenient to deal with \emph{rooted mm spaces} (sometimes called \emph{pointed mm spaces}). 

\begin{definition}[rooted mm space]
A \emph{rooted mm space} is a pair $(\mathcal{X},o)$ where $\mathcal{X}$ is an mm space and $o$ is a designated point in $\mathcal{X}$ called the \emph{origin} of $\mathcal{X}$. 
\end{definition}

A general approach to comparing rooted mm spaces was outlined in Section~\ref{sec:intro}. This approach combines the idea of Gromov-Hausdorff distance with the idea of L\'{e}vy-Prokhorov distance. The combination was introduced in \cite{Mie09} and has slight variations. The following definition is the version from \cite{Lei19}.

\begin{definition}[GHP distance]
Let $(\mathcal{X}_1,o_1)$ and $(\mathcal{X}_2,o_2)$ be rooted mm spaces with $\mathcal{X}_1=(X_1,d_1,\mu_1)$ and $\mathcal{X}_2=(X_2,d_2,\mu_2)$. The \emph{Gromov-Hausdorff-Prokhorov distance} (the \emph{GHP distance} for short), denoted $d_{GHP}$, is defined by
$$
\begin{array}{l}
d_{GHP}((\mathcal{X}_1,o_1),(\mathcal{X}_2,o_2)) = \\
\hspace{5mm} \inf_{\mathcal{Y},f_1,f_2} 
\{d(o_1,o_2) + d_H(f_1(X_1),f_2(X_2)) + \pi\left(\mu_1 \circ f_1^{-1}, \mu_2 \circ f_2^{-1}\right)\}
\end{array}
$$
where the infimum is taken over all mm spaces $\mathcal{Y}$ and all measurable isometries $f_1: \mathcal{X}_1 \to \mathcal{Y}$ and $f_2: \mathcal{X}_2 \to \mathcal{Y}$. The distances $d$, $d_H$, and $\pi$ denote respectively the distance, the Hausdorff distance, and the L\'{e}vy-Prokhorov distance in $\mathcal{Y}$. The measures $\mu_1 \circ f_1^{-1}$ and $\mu_2 \circ f_2^{-1}$ are the push-forward measures.
\end{definition}

%\subsection{Distance Decay}

%\paragraph{Distance decay.}
As noted in Section~\ref{sec:intro}, the GHP distance has the disadvantage that $d_{GHP}$ does not capture the distance decay effect occurring in most real-world systems. This disadvantage is eliminated in the distance defined in \cite{ADH13}. We define a simplified version of this distance below and call it \emph{neighborhood-based distance}. Its idea can be informally described as follows. When comparing rooted mm spaces $(\mathcal{X}_1,o_1)$ and $(\mathcal{X}_2,o_2)$, we consider the restrictions of $\mathcal{X}_1$ and $\mathcal{X}_2$ to closed balls of radius $r$ around $o_1$ and $o_2$. For every radius $r$, we consider the GHP distance between the corresponding restrictions and sum up these distances over all values of $r$ with an exponential decrease when $r$ increases. 

Let $\overline{B}$ be a closed ball in an mm space $\mathcal{X}=(X,d,\mu)$. This ball, along with the metric and measure obtained by restricting $d$ and $\mu$ to $\overline{B}$, form the mm space called the \emph{restriction} of $\mathcal{X}$ to $\overline{B}$. 

\begin{definition}[neighborhood-based distance]
\label{def:nbd}
Let $(\mathcal{X}_1,o_1)$ and $(\mathcal{X}_2,o_2)$ be rooted mm spaces. For every number $r \in [0,\infty)$, let $\delta_r$ denote the GHP distance between the restriction of $\mathcal{X}_1$ to $\overline{B}_r(o_1)$ and the restriction of $\mathcal{X}_2$ to $\overline{B}_r(o_2)$. The \emph{neighborhood-based distance}, denoted $d_{nb}$, is defined by
$$
d_{nb}((\mathcal{X}_1,o_1),(\mathcal{X}_2,o_2)) = \int_0^\infty e^{-r} \, \delta_r \ dr
$$
\end{definition}

It follows from the definition that 
$$
d_{nb}((\mathcal{X}_1,o_1),(\mathcal{X}_2,o_2))=0
$$
for isomorphic $(\mathcal{X}_1,o_1)$ and $(\mathcal{X}_2,o_2)$ even if they are different. Therefore, $d_{nb}$ is not a metric but, as shown in \cite{ADH13}, $d_{nb}$ is a pseudometric.

%%%%%%%%%%%%%%%%%%%%%%%%%%%%%%%%%%%%%%%%%%%%%%%%%%%%%%%%%%%%%%%%%%%

%%% \input{s3} end
%
%
%%% \input{s4} begin
%
\section{Radial Distribution Distance}
\label{sec:4}

Can we compute the neighborhood-based distance $d_{nb}$ efficiently? Note that an efficient algorithm for computing $d_{nb}$ could also be used to compute the Gromov-Hausdorff distance efficiently. However, as shown in \cite{Mem07,MSW19}, it is $\np$-hard to compute the Gromov-Hausdorff distance for finite metric spaces (see Proposition 3.16 in \cite{MSW19}). Moreover, under standard complexity-theoretic assumptions, there is no polynomial-time approximation algorithm with a reasonable factor for computing this distance \cite{Sch17}. 

In this section, we define another distance between rooted mm spaces called the \emph{radial distribution distance} and denoted by $d_{rd}$. On the one hand, $d_{rd}$ is coarser than $d_{nb}$, more exactly, $d_{rd}$ is a lower bound on $d_{nb}$. On the other hand, $d_{rd}$ can be computed efficiently: computing the radial distribution distance between finite rooted mm spaces takes time quasilinear in the total number of points. 

\subsection{Definition and Properties}

Consider rooted mm spaces $(\mathcal{X}_1,o_1)$ and $(\mathcal{X}_2,o_2)$ where $\mathcal{X}_1 = (X_1,d_1,\mu_1)$ and $\mathcal{X}_2 = (X_2,d_2,\mu_2)$. To define the \emph{radial distribution distance} between them, we first define the following functions $m_1$ and $m_2$ from $[0,\infty)$ to itself: for every number $r \in [0,\infty)$,
$$
\begin{array}{lcl}
m_1(r) & = & \mu_1\left(\overline{B}_r(o_1)\right) = \mu_1\left(\{x \in X_1 \ | \ d_1(x,o_1) \le r\}\right) \\
m_2(r) & = & \mu_2\left(\overline{B}_r(o_2)\right) = \mu_2\left(\{x \in X_2 \ | \ d_2(x,o_2) \le r\}\right) 
\end{array}
$$
That is, $m_1(r)$ is the measure (we could call it ``mass'' or ``weight'') of the ball of radius $r$ around the origin $o_1$ and, similarly, for $m_2(r)$. The functions are non-decreasing and bounded. Also, $m_1$ and $m_2$ are c\`{a}dl\`{a}g functions, i.e., they are right continuous with left limits, see Lemma 2.8 in \cite{ADH13}. Therefore, the distance function in the definition below is well defined.

%%%%%%%%%%%%%%%%%%%%%%%%%%%%%%%%%%%%%%%%%%%%%%%%%%%%%%%%%%%%

\begin{definition}[radial distribution distance]
Let $(\mathcal{X}_1,o_1)$ and $(\mathcal{X}_2,o_2)$ be rooted mm spaces. The \emph{radial distribution distance}, denoted $d_{rd}$, is defined by
$$
d_{rd}((\mathcal{X}_1,o_1),(\mathcal{X}_2,o_2)) = \int_0^\infty e^{-r} \, |m_1(r)-m_2(r)| \, dr
$$
\end{definition}

%%%%%%%%%%%%%%%%%%%%%%%%%%%%%%%%%%%%%%%%%%%%%%%%%%%%%%%%%%%%

We use the term ``radial distribution distance'' keeping in mind radial distribution functions from physics where they are used to measure the probability of finding a particle at distance $r$ from a certain ``origin'' particle. In the setting of metric measure spaces, a radial distribution function describes how the total ``mass'' (``weight'') of points at distance $r$ from a center $x$ changes when $r$ increases. The radial distribution distance is basically a combination of the $L^1$ distance between the cumulative versions of the radial distribution functions and an exponential distance decay. Theorems~\ref{th:pseudometric-rd} and \ref{th:lower-bound} show basic properties of the radial distribution distance. 

%%%%%%%%%%%%%%%%%%%%%%%%%%%%%%%%%%%%%%%%%%%%%%%%%%%%%%%%%%%%

\begin{theorem}
\label{th:pseudometric-rd}
The function $d_{rd}$ is a pseudometric on the set of rooted mm spaces. 
\end{theorem}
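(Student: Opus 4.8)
The plan is to verify directly the four defining properties of a pseudometric: non-negativity, symmetry, vanishing on identical arguments, and the triangle inequality. The organizing observation is that $d_{rd}$ is nothing more than the $L^1$ distance between the radial distribution functions $m_1$ and $m_2$ with respect to the fixed finite measure $d\nu(r)=e^{-r}\,dr$ on $[0,\infty)$. Equivalently, if we write $m_{(\mathcal{X},o)}$ for the function $r \mapsto \mu(\overline{B}_r(o))$ attached to a rooted mm space $(\mathcal{X},o)$, then the assignment $(\mathcal{X},o)\mapsto m_{(\mathcal{X},o)}$ sends rooted mm spaces into $L^1([0,\infty),\nu)$, and $d_{rd}$ is the pullback of the $L^1$ pseudometric along this assignment. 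Since any pullback of a pseudometric along an arbitrary map is again a pseudometric, this already proves the theorem; but I would spell out the individual axioms for concreteness.

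First I would record finiteness, so that the quantity is well defined as a nonnegative real number rather than $+\infty$. Because each $(X_i,d_i)$ is compact and each $\mu_i$ is a finite Borel measure, the function $m_i$ is bounded by the total mass $\mu_i(X_i)=:M_i<\infty$, so the integrand obeys $e^{-r}\,|m_1(r)-m_2(r)| \le (M_1+M_2)\,e^{-r}$, whose integral over $[0,\infty)$ equals $M_1+M_2$. Combined with the c\`{a}dl\`{a}g (hence measurable) property of the $m_i$ noted before the definition, the integral converges. Non-negativity is then immediate, as the integrand is the product of the nonnegative weight $e^{-r}$ and the nonnegative quantity $|m_1(r)-m_2(r)|$. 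Symmetry follows from $|m_1(r)-m_2(r)|=|m_2(r)-m_1(r)|$ for every $r$, and the reflexivity axiom holds because identical rooted mm spaces carry identical radial distribution functions, so the integrand vanishes everywhere.

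The only genuine computation is the triangle inequality. Given a third rooted mm space $(\mathcal{X}_3,o_3)$ with radial distribution function $m_3$, I would begin from the pointwise triangle inequality for the absolute value,
\[
|m_1(r)-m_3(r)| \;\le\; |m_1(r)-m_2(r)| + |m_2(r)-m_3(r)|,
\]
valid for every $r\in[0,\infty)$. Multiplying through by the nonnegative weight $e^{-r}$ preserves the inequality, and integrating over $[0,\infty)$, using monotonicity and additivity of the integral, yields exactly $d_{rd}((\mathcal{X}_1,o_1),(\mathcal{X}_3,o_3)) \le d_{rd}((\mathcal{X}_1,o_1),(\mathcal{X}_2,o_2)) + d_{rd}((\mathcal{X}_2,o_2),(\mathcal{X}_3,o_3))$.

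I do not expect a serious obstacle, since the statement is in essence the assertion that a weighted $L^1$ distance is a pseudometric. Two points nonetheless merit care. The first is confirming integrability and finiteness, which as above rests squarely on compactness of the spaces and finiteness of the measures; without these, the functions $m_i$ could be unbounded and the weighting by $e^{-r}$ would have to be reexamined. The second is resisting the temptation to claim a metric: we should prove only that $d_{rd}$ vanishes on identical (indeed isomorphic) arguments, because distinct non-isomorphic rooted mm spaces can share the same function $m_i$ and therefore sit at $d_{rd}$-distance zero, which is precisely why $d_{rd}$ is a pseudometric rather than a metric.
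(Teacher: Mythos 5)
Your proposal is correct and follows essentially the same route as the paper's proof, which likewise reduces the claim to the fact that $d_{rd}$ is the $L^1$ distance between the (measurable, integrable) weighted functions $e^{-r}m_i(r)$. Your version merely spells out the finiteness check and the pointwise triangle inequality that the paper leaves implicit.
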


\begin{proof}
It is obvious that $d_{rd}$ is a distance function. The triangle inequality for $d_{rd}$ can be seen from the following two facts:
\begin{itemize}
\item The functions $e^{-r} m_1(r)$ and $e^{-r} m_2(r)$ are measurable functions on $[0,\infty)$ such that the integrals $\int_0^\infty e^{-r} \, m_1(r) \, dx$ and $\int_0^\infty e^{-r} \, m_2(r) \, dx$ are finite. 
\item The distance function 
$$
d(f_1,f_2) = \int_0^\infty |f_1(x)-f_2(x)| \, dx
$$
is the $L^1$ metric on the set of measurable functions $f$ such that the integral $\int_0^\infty |f(x)| \, dx$ is finite. 
\end{itemize}
\qed
\end{proof}

%%%%%%%%%%%%%%%%%%%%%%%%%%%%%%%%%%%%%%%%%%%%%%%%%%%%%%%%%%%%

\begin{theorem}
\label{th:lower-bound}
For all rooted mm spaces $(\mathcal{X}_1,o_1)$ and $(\mathcal{X}_2,o_2)$, 
\begin{equation}
\label{eq:lower-bound}
d_{rd}((\mathcal{X}_1,o_1),(\mathcal{X}_2,o_2)) \, \le \, d_{nb}((\mathcal{X}_1,o_1),(\mathcal{X}_2,o_2)) .
\end{equation}
\end{theorem}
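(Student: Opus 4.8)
The plan is to reduce the integral inequality~(\ref{eq:lower-bound}) to a pointwise comparison of integrands. Since both $d_{rd}$ and $d_{nb}$ integrate against the same non-negative weight $e^{-r}$ and integration is monotone, it suffices to prove, for every fixed $r \in [0,\infty)$, the pointwise bound $|m_1(r)-m_2(r)| \le \delta_r$, where $\delta_r$ is the GHP distance between the restriction of $\mathcal{X}_1$ to $\overline{B}_r(o_1)$ and the restriction of $\mathcal{X}_2$ to $\overline{B}_r(o_2)$. The crucial observation is that $m_1(r)=\mu_1(\overline{B}_r(o_1))$ and $m_2(r)=\mu_2(\overline{B}_r(o_2))$ are exactly the total masses of the measures carried by these two restrictions. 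Thus the pointwise bound is really the statement that the GHP distance between two rooted mm spaces dominates the difference of their total masses.

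To prove this, I would first isolate the L\'{e}vy-Prokhorov term inside the GHP distance and record a short lemma: for any two finite Borel measures $\nu_1,\nu_2$ on a metric space with underlying set $Y$, one has $|\nu_1(Y)-\nu_2(Y)| \le \pi(\nu_1,\nu_2)$. This follows at once from the definition of $\pi$ by taking the test set $S=Y$. Indeed, $N_r(Y)=Y$, so the two defining conditions collapse to $\nu_1(Y) \le \nu_2(Y)+r$ and $\nu_2(Y) \le \nu_1(Y)+r$, forcing every admissible $r$ to satisfy $|\nu_1(Y)-\nu_2(Y)| \le r$; taking the infimum over admissible $r$ gives the claim.

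Next I would fix any common mm space $\mathcal{Y}$ and any measurable isometries $f_1,f_2$ appearing in the GHP infimum for the two restrictions. Since a push-forward measure always preserves total mass, the push-forwards $\nu_i = \mu_i \circ f_i^{-1}$ satisfy $\nu_1(Y)=m_1(r)$ and $\nu_2(Y)=m_2(r)$. Applying the lemma and then discarding the non-negative terms $d(o_1,o_2)$ and $d_H(f_1(X_1),f_2(X_2))$, I obtain
$$|m_1(r)-m_2(r)| \le \pi(\nu_1,\nu_2) \le d(o_1,o_2) + d_H(f_1(X_1),f_2(X_2)) + \pi(\nu_1,\nu_2).$$
Taking the infimum over all $\mathcal{Y},f_1,f_2$ on the right-hand side yields $|m_1(r)-m_2(r)| \le \delta_r$, and integrating against $e^{-r}$ then gives~(\ref{eq:lower-bound}).

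I do not anticipate a serious obstacle; the argument amounts to projecting the GHP distance onto its measure component and throwing away the two remaining non-negative summands. The only step needing mild care is the use of the whole space $S=Y$ in the L\'{e}vy-Prokhorov definition: I must confirm that $N_r(Y)=Y$ in the ambient space $\mathcal{Y}$ (which holds because an $r$-neighborhood cannot escape the space), so that the two-sided mass bound is valid for \emph{every} admissible $r$ and hence survives passage to the infimum.
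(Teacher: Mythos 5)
Your proof is correct, and it follows the same overall strategy as the paper's: reduce the integral inequality to the pointwise bound $|m_1(r)-m_2(r)| \le \delta_r$, isolate the L\'{e}vy-Prokhorov term inside the GHP infimum, discard the non-negative $d(o_1,o_2)$ and $d_H$ summands, and pass to the infimum over $\mathcal{Y}, f_1, f_2$. The one place where you genuinely diverge is the key step of lower-bounding the Prokhorov term, and there your version is the cleaner one. You take the test set $S=Y$ (the whole ambient space), observe that $N_\epsilon(Y)=Y$, and conclude that \emph{every} admissible $\epsilon$ satisfies $|\nu_1(Y)-\nu_2(Y)| \le \epsilon$, so the infimum is at least the mass difference; combined with the fact that push-forwards preserve total mass, this gives $|m_1(r)-m_2(r)| \le \pi(\nu_1,\nu_2)$ directly. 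The paper instead works with balls $\overline{B}_{r+\epsilon}$ and justifies the bound $\gamma \ge |m_1(r)-m_2(r)|$ by noting that the two defining inequalities \emph{hold} when $\epsilon = |m_1(r)-m_2(r)|$ --- but as stated that only bounds the infimum from \emph{above}; the correct reading is exactly your argument, namely that the inequalities force every admissible $\epsilon$ to be at least the mass difference. So your lemma-based formulation is not just equivalent but is the logically watertight version of the step the paper glosses over.
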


\begin{proof}
Let $\overline{B}_r(o_1)$, $\overline{B}_r(o_2)$, and $\delta_r$ be the same as in Definition~\ref{def:nbd}. Let $\mu_1$ and $\mu_2$ be the measures in $\mathcal{X}_1$ and $\mathcal{X}_2$ respectively. We prove the following inequality that implies claim (\ref{eq:lower-bound}): 
\begin{equation}
\label{eq:stronger-inequality}
|\mu_1(\overline{B}_r(o_1)) - \mu_2(\overline{B}_r(o_2))| \le \delta_r.
\end{equation}
By the definition of the GHP distance, we have 
\begin{equation}
\label{eq:delta-pi}
\delta_r \ge \pi(\mu_1 \circ f_1^{-1}, \mu_2 \circ f_2^{-1}) 
\end{equation}
where $f_1$ and $f_2$ are arbitrary measurable isometries of the corresponding restrictions to an arbitrary mm space $\mathcal{Y}$ and $\pi$ is the L\'{e}vy-Prokhorov distance in $\mathcal{Y}$. By the definition of $\pi$, 
\begin{equation}
\label{eq:pi-epsilon}
\pi(\mu_1 \circ f_1^{-1}, \mu_2 \circ f_2^{-1}) = \gamma
\end{equation}
where $\gamma$ is the infimum of all $\epsilon \ge 0$ such that 
$$
\left\{
\begin{array}{lcl}
\mu_1\left(\overline{B}_r(o_1)\right) & \le & \mu_2\left(\overline{B}_{r+\epsilon}(o_2)\right) + \epsilon \\
\mu_2\left(\overline{B}_r(o_2)\right) & \le & \mu_1\left(\overline{B}_{r+\epsilon}(o_1)\right) + \epsilon
\end{array}
\right.
$$
Both inequalities above hold if we take 
$$
\epsilon = |\mu_1(\overline{B}_r(o_1)) - \mu_2(\overline{B}_r(o_2))|
$$ 
and, therefore, we have 
\begin{equation}
\label{eq:gamma}
\gamma \ge |\mu_1(\overline{B}_r(o_1)) - \mu_2(\overline{B}_r(o_2))|.
\end{equation}
Now, combining (\ref{eq:delta-pi})--(\ref{eq:gamma}), we obtain inequality (\ref{eq:stronger-inequality}).
\qed
\end{proof}

%%%%%%%%%%%%%%%%%%%%%%%%%%%%%%%%%%%%%%%%%%%%%%%%%%%%%%%%%%%%

The theorem above shows that $d_{rd}$ is a lower bound on $d_{nb}$. This bound is strict, which can be seen from the following simple example. Consider a rooted mm space on a set of three points: $\{a, b, c\}$ where $a$ is the origin. The distance between any two points is $1$. The measure assigned to each point is $1$. Consider another rooted mm space that differs from the first one in only the measure of $b$ and $c$: in the second space, the measure of $b$ is $0.5$ and the measure of $c$ is $1.5$. It is easy to see that the radial distribution distance between these spaces is $0$, while the neighborhood distance is not zero.

%%%%%%%%%%%%%%%%%%%%%%%%%%%%%%%%%%%%%%%%%%%%%%%%%%%%%%%%%%%%

\subsection{Computing the Radial Distribution Distance}

We describe a straightforward algorithm that computes the radial distribution distance between finite rooted mm spaces efficiently. In the description, we ignore all problems of numerical analysis and deal with real numbers. The algorithm is illustrated by applying it to the following example of two rooted mm spaces $(\mathcal{X}_1,o_1)$ and $(\mathcal{X}_2,o_2)$. 

%%% Fig 1: begin %%%%%%%%%%%%%%%%%%%%%%%%%%%%%%%%%%%%%%%%%%%%%%

\begin{figure}
\begin{center}
  \begin{minipage}{0.45\linewidth}
\begin{flushright}
\begin{tikzpicture}
\tikzset{dot/.style = {circle, fill, minimum size=#1,
              inner sep=0pt, outer sep=0pt},
dot/.default = 3pt % size of the circle diameter
}
\draw (0,0) node[dot,fill=black,label=above:$v_1$]{};
\draw (1,0) node[dot,fill=black,label={[xshift=-0.2cm]$v_2$}]{};
\draw (2.2,0) node[dot,fill=black,label={[xshift=0.2cm]$v_3$}]{};
\draw (1.6,1) node[dot,fill=black,label=above:$v_4$]{};
\draw (2.6,1) node[dot,fill=black,label=above:$v_5$]{};
\draw (3.2,0) node[dot,fill=black,label=above:$v_6$]{};
\draw (0,0) -- (1,0) -- (2.2,0)-- (1.6,1) -- (1,0);
\draw (2.2,0) -- (3.2,0);
\draw (1.6,1) -- (2.6,1);
\end{tikzpicture}
\end{flushright}
\end{minipage} \hspace{10mm}\noindent\begin{minipage}{0.45\linewidth}
          \begin{tikzpicture}
\tikzset{
dot/.style = {circle, fill, minimum size=#1,
              inner sep=0pt, outer sep=0pt},
dot/.default = 3pt % size of the circle diameter 
}
\draw (0,0) node[dot,fill=black,label=above:$u_1$]{};
\draw (1,0) node[dot,fill=black,label={[xshift=-0.2cm]$u_2$}]{};
\draw (2,0) node[dot,fill=black,label=right:$u_3$]{};
\draw (1,1) node[dot,fill=black,label=above:$u_4$]{};
\draw (2,1) node[dot,fill=black,label=above:$u_5$]{};
\draw (3,1) node[dot,fill=black,label=above:$u_6$]{};
\draw (0,0) -- (1,0) -- (2,0)-- (2,1) -- (1,1) -- (1,0);
\draw (2,1) -- (3,1);
\end{tikzpicture}
\end{minipage}
\end{center}
\caption{Graphs for $(\mathcal{X}_1,o_1)$ and $(\mathcal{X}_2,o_2)$}
\label{fig:2rs}
\end{figure}

%%% Fig 1: end %%%%%%%%%%%%%%%%%%%%%%%%%%%%%%%%%%%%%%%%%

\paragraph{Example.}
We define $(\mathcal{X}_1,o_1)$, where $\mathcal{X}_1=(X_1,d_1,\mu_1)$, using the graph on the left in Fig.~\ref{fig:2rs}: namely, $X_1 = \{v_1, \ldots, v_6\}$ and $d_1(v_i,v_j)$ is the length of a shortest path between $v_i$ and $v_j$. The measure $\mu_1$ is defined by assigning numerical values to all one-element subsets of $X_1$; this assignment is shown in the second column of the table in Fig.~\ref{fig:input}. The origin $o_1$ is defined to be $v_1$. 

The rooted mm space $(\mathcal{X}_2,o_2)$, where $\mathcal{X}_2=(X_2,d_2,\mu_2)$, is defined similarly using the graph on the right in Fig.~\ref{fig:2rs}. The underlying set $X_2$ consists of the vertices $u_1, \ldots, u_6$ with the shortest-path metric. The measure $\mu_2$ is given by the forth column of the table in Fig.~\ref{fig:input} and the origin $o_2$ is defined to be $u_1$. 

\paragraph{Algorithm.}
The radial distribution distance between finite rooted mm spaces can be computed using only part of information about the input spaces. The algorithm uses the distances between the origins and all other points; it does not use the distances between non-origin points. In terms of matrices, if a distance function in a rooted mm space with $n$ points is represented by an $n \times n$ matrix, then the algorithm uses only one row (or column) corresponding to the origin. Therefore, we assume that the algorithm takes as input only the following:
\begin{itemize}
\item the distances between the origins and non-origins;
\item the measures of all one-element sets.
\end{itemize}
When the algorithm is applied to our example, it takes as input the table in Fig.~\ref{fig:input}. 

%%% Fig 2: begin %%%%%%%%%%%%%%%%%%%%%%%%%%%%%%%%%%%%%%%%%%%%%%

\begin{figure}
\begin{center}
\begin{tabular}{||c|c||c|c||} 
\multicolumn{2}{c}{rooted mm space $(\mathcal{X}_1,o_1)$} & 
\multicolumn{2}{c}{rooted mm space $(\mathcal{X}_2,o_2)$} \\ \hline
distances        & measures           & distances        & measures  \ \\ \hline
\ $d_1(v_1,v_1)=0$ \ & \ $\mu_1(\{v_1\})=1$ \ & \ $d_2(u_1,u_1)=0$ \ & \ $\mu_2(\{u_1\})=1$ \ \\
\ $d_1(v_1,v_2)=1$ \ & \ $\mu_1(\{v_2\})=1$ \ & \ $d_2(u_1,u_2)=1$ \ & \ $\mu_2(\{u_2\})=1$ \ \\
\ $d_1(v_1,v_3)=2$ \ & \ $\mu_1(\{v_3\})=2$ \ & \ $d_2(u_1,u_3)=2$ \ & \ $\mu_2(\{u_3\})=2$ \ \\
\ $d_1(v_1,v_4)=2$ \ & \ $\mu_1(\{v_4\})=2$ \ & \ $d_2(u_1,u_4)=2$ \ & \ $\mu_2(\{u_4\})=2$ \ \\
\ $d_1(v_1,v_5)=3$ \ & \ $\mu_1(\{v_5\})=1$ \ & \ $d_2(u_1,u_5)=3$ \ & \ $\mu_2(\{u_5\})=1$ \ \\
\ $d_1(v_1,v_6)=3$ \ & \ $\mu_1(\{v_6\})=1$ \ & \ $d_2(u_1,u_6)=4$ \ & \ $\mu_2(\{u_6\})=1$ \ \\ \hline
\end{tabular}
\end{center}
\caption{Input for the algorithm}
\label{fig:input}
\end{figure}

%%% Fig 2: end %%%%%%%%%%%%%%%%%%%%%%%%%%%%%%%%%%%%%%%%%

Here is a sketch of the algorithm:

\begin{enumerate}
\item Compute the set $R_1$ of all values of radius in $(\mathcal{X}_1,o_1)$: 
$$
R_1 \ = \ \{r \ | \ \mbox{$r=d_1(o_1,x_1)$ for some point $x_1$ in $\mathcal{X}_1$}\}
$$ 
where $d_1$ is the metric in $\mathcal{X}_1$. 
\item Compute the set $R_2$ of all values of radius in $(\mathcal{X}_2,o_2)$ defined similarly. 
\item Compute the union $R = R_1 \cup R_2$. In our example, $R=\{0, 1, 2, 3, 4\}$.
\item Compute the \emph{cumulative radial distribution} of $(\mathcal{X}_1,o_1)$, i.e., the set of pairs $(r,m_1(r))$ where $r \in S$ and $m_1(r) = \mu_1\left(\overline{B}_r(o_1)\right)$. In our example, this set is
$$
\{(r,m_1(r)\}_{r \in R} = \{(0,1), (1,2), (2,6), (3,8), (4,8)\}.
$$
\item Compute the cumulative radial distribution of $(\mathcal{X}_2,o_2)$ defined similarly. In our example, this set is 
$$
\{(r,m_2(r)\}_{r \in R} = \{(0,1), (1,2), (2,6), (3,7), (4,8)\}.
$$
\item Compute the radial distribution distance by
$$
d_{rd}((\mathcal{X}_1,o_1),(\mathcal{X}_2,o_2)) = \sum_{r \in R} e^{-r} |m_1(r)-m_2(r)| .
$$
Applying this to our example, we obtain 
$$
d_{rd}((\mathcal{X}_1,o_1),(\mathcal{X}_2,o_2)) = e^{-3}\cdot 1 \approx 0.05 .
$$
\end{enumerate}
Note that the algorithm takes $O(n \log n)$ steps where $n$ is the number of points in the input spaces.

Considering the point $v_1$ in $\mathcal{X}_1$, what points in $\mathcal{X}_2$ are most ``similar'' to $v_1$ and what points are most ``dissimilar'' to it? The calculations show that $u_1$ and $u_6$ are most similar to $v_1$ (with $d_{rd} \approx 0.05$), while the points $u_3$ and $u_4$ are most dissimilar to $v_1$ (with $d_{rd} \approx 2.01$). 

%%%%%%%%%%%%%%%%%%%%%%%%%%%%%%%%%%%%%%%%%%%%%%%%%%%%%%%%%%%%%%%%

\subsection{Extension for Feature Vectors}

The radial distribution distance $d_{rd}$ can be used to measure similarity between points in a metric space if points are described with a single feature: a value of this feature for a given point is viewed as the point's ``weight''. However, it is more typical that a point is described by a feature vector rather than a single feature. Each component of the vector corresponds to a measure on the metric space. For example, consider a recommender system for movies that uses item-item collaborative filtering. A metric on movies is based on the similarity between them calculated using people's ratings. In addition to the metric, each movie is described by a feature vector that can include, for example, the number of reviews, budget, box office, etc. \cite{NDK15}   

How can we compare points in a metric space if points are described using feature vectors? Suppose a feature vector consists of $k$ components that correspond to measures $\mu_1,\ldots,\mu_k$. Each measure $\mu_i$ determines the radial distribution distance $d_{rd}^{(i)}$ and we can consider their sum
\begin{equation}
\label{eq:sum}
d_{rd}^* = d_{rd}^{(1)} + \ldots + d_{rd}^{(k)} .
\end{equation}
The value $d_{rd}^*((\mathcal{X},x),(\mathcal{Y},y))$ essentially shows the distance between the neighborhoods of $x$ and $y$ if we compare points by their feature vectors. Note that the sum of pseudometrics is also a pseudometric. Also note that instead of the sum in (\ref{eq:sum}), we could take any other norm, for example, the Euclidean norm or the maximum.

%%%%%%%%%%%%%%%%%%%%%%%%%%%%%%%%%%%%%%%%%%%%%%%%%%%%%%%%%%%%

%%% \bibliographystyle{plain}
%\bibliographystyle{plainurl}
%\bibliographystyle{alpha}
%
%%% \bibliography{sisap-refs}
%
%\thispagestyle{empty}
%

%
%
\end{document}